\newcommand{\tr}{^\mathrm{T}}
\def\define{\buildrel \triangle \over =}
\newtheorem{theorem}{Theorem}
\newcommand{\ri}{\textcolor{red}}
\title{\LARGE \bf
State of Charge Estimation of Parallel Connected Battery Cells \\ via Descriptor System Theory*
}
\author{Dong Zhang$^{1}$, Luis D. Couto$^{2}$, Sebastien Benjamin$^{3}$, Wente Zeng$^{4}$, Daniel F. Coutinho$^{5}$ and Scott J. Moura$^{1}$
\thanks{*This work is sponsored by Total S.A. and Saft Batteries.}
\thanks{$^{1}$Dong Zhang and Scott J. Moura is with Department of Civil and Environmental Engineering, University of California, Berkeley, California, 94720,
USA. {\tt\small \{dongzhr,smoura\}@berkeley.edu}}%
\thanks{$^{2}$Luis D. Couto is with Department of Control Engineering and System Analysis, Universit\'e Libre de Bruxelles, B-1050 Brussels, Belgium. {\tt\small lcoutome@ulb.ac.be}}
\thanks{$^{3}$Sebastien Benjamin is with Saft Batteries, Total S.A.. {\tt\small Sebastien.BENJAMIN@saftbatteries.com}}
\thanks{$^{4}$Wente Zeng is with Total S.A.. {\tt\small wente.zeng@total.com}}
\thanks{$^{5}$Daniel F. Coutinho is with Department of Automation and Systems, UFSC, PO BOX 476, 88040-900, Florian{\'o}polis, SC, Brazil. {\tt\small daniel.coutinho@ufsc.br}}%
}
\begin{document}

\maketitle
\thispagestyle{empty}
\pagestyle{empty}

\begin{abstract}

This manuscript presents an algorithm for individual Lithium-ion (Li-ion) battery cell state of charge (SOC) estimation when multiple cells are connected in parallel, using only terminal voltage and total current measurements. For battery packs consisting of thousands of cells, it is desirable to estimate individual SOCs by only monitoring the total current in order to reduce sensing cost. Mathematically, series connected cells yield dynamics given by ordinary differential equations under classical full voltage sensing. In contrast, parallel connected cells are evidently more challenging because the dynamics are governed by a nonlinear descriptor system, including differential equations and algebraic equations arising from voltage and current balance across cells. An observer with linear output error injection is formulated, where the individual cell SOCs and local currents are locally observable from the total current and voltage measurements.
The asymptotic convergence of differential and algebraic states is established by considering local Lipschitz continuity property of system nonlinearities. 
Simulation results on LiNiMnCoO$_2$/Graphite (NMC) cells illustrate convergence for SOCs, local currents, and terminal voltage. 

\end{abstract}

\section{Introduction}

Lithium-ion (Li-ion) batteries have emerged as one of the most prominent energy storage devices for large-scale energy applications, e.g., hybrid electric vehicle (HEV), battery electric vehicles (BEV) and smart grids, due to their high energy and power density, low self-discharge and long lifetime \cite{chaturvedi2010algorithms}. A battery pack system generally consists of hundreds or thousands of single cells connected in parallel and series connections in order to fulfill the requirements of high-energy and high-power applications \cite{Zhong-2014}. It is well-known that Li-ion cells are sensitive to overcharge and over-discharge \cite{Lin-2015}. An accurate estimation of the internal states, including state of charge (SOC), enables a battery management system (BMS) to prolong battery service life by ensuring individual cells within a pack do not overcharge or over-discharge.

Battery pack system modeling can be divided into three categories. The first approach treats the entire pack as one lumped single cell \cite{Castano-2015}. However, the internal states of individual cells within the pack are likely to be different, due to parameter heterogeneity. Therefore, some cells are more prone to violate safety-critical constraints than others, which cannot be resolved from the lumped single cell approach. The second modelling approach also relies on a single cell model, but it focuses on a set of specific in-pack cells -- the weakest and the strongest ones, as representatives of the pack dynamics \cite{Zhong-2014,Hua-2015}. The last modelling approach is based on the interconnection of single cell models \cite{Zheng-2013,Zhang-2016,Zhao-2015}. This approach benefits from high fidelity cell-by-cell resolution, but it might suffer from high real-time computational burden. To counteract this computational challenge, most of these approaches resort to equivalent circuit models, which tend to have a low complexity when compared to more sophisticated electrochemical models.

The state estimation problem for series arrangements of battery cells has been studied previously \cite{Lin-2015,hu2010estimation}, whereas the estimation for cells in parallel has been overlooked for multiple reasons. First, cells in parallel are widely considered to behave as one single cell. However, an implicit assumption behind this reasoning is that the applied current is evenly split amongst the cells in parallel. This is hardly true in practice due to cell heterogeneities, such as non-uniform parameter values and temperatures \cite{Bruen-2016}. 
This fact makes the estimation problem for parallel battery cells relevant. Secondly, the estimation problem for battery cells in series is arguably easier to solve than the parallel counterpart, because in the series case the input current to each battery cell is the same and it can be practically measured. In the parallel case, each cell's local currents are unknown and determined by algebraic constraints. Due to sensing limitations, only the total current can be measured. Therefore, the parallel configuration turns out to be a differential algebraic equation (DAE) system that requires non-trivial estimation theories.

A DAE system, \emph{a.k.a.} descriptor system, involving both differential and algebraic equations, is a powerful modelling framework that generalizes ordinary differential (normal) systems \cite{duan2010analysis}. 
The state observer design for linear descriptor systems is a rich research topic \cite{duan2010analysis,Niko-1992,Chisci-1992,Darouach-1995}. In contrast, state observers for nonlinear descriptor systems is less prolific. Some relevant contributions encompass a local asymptotic state observer 
\cite{Boutayeb-1995}, looking at the system as differential equations on a restricted manifold 
\cite{Zimmer-1997}, and an index-1 DAE observer \cite{Aslund-2006}. 
Other works consider the case of Lipschitz nonlinearities \cite{rajamani1998observers}, which have served as a basis for Lyapunov-based observer design using the linearized system \cite{kaprielian1992observer}, and LMI approaches producing state observers in singular \cite{Lu-2006} and non-singular \cite{Darouach-2008} forms. Another Lipschitz system was considered in \cite{Shields-1997}, where the temporal separation between slow and fast dynamics was exploited to design a robust state observer. 
Nonlinear descriptor systems have also been estimated through moving horizon approaches \cite{Albuquerque-1997} and Kalman filters \cite{Becerra-2001,Mandela-2010,Puranik-2012}.

In light of the aforementioned literature, the contributions of this paper are threefold:
\begin{enumerate}
    \item Propose a novel framework for modelling Li-ion battery cells in parallel as a nonlinear descriptor system;
    \item Conduct observability analysis of such a system;
    \item Design a Lyapunov-based asymptotic state observer for both differential and algebraic state estimation, using only voltage and total current measurements.
\end{enumerate}

The reminder of this paper is organized as follows. Section \ref{s:mod} introduces the modelling framework for parallel cells. Section \ref{s:mot} motivates the importance of observer design with cell heterogeneity. Section \ref{s:obs_ana} provides the local observability analysis for the nonlinear descriptor system. Section \ref{s:obs_design} discusses the state observer design and its convergence analysis. Finally, the effectiveness of the proposed approach is illustrated in Section \ref{s:sim} via numerical simulations. Conclusions are drawn in Section \ref{s:conclusion}.

\textbf{Notation.} Throughout the manuscript, the symbols $I_{p \times q}$ and $\mathbf{0}_{p \times q}$ denote the identity matrix and the zero matrix with dimension $p \times q$, respectively. The inner product between $x,y\in\mathbb{R}^n$ is given by
\begin{equation}
    \langle x,y \rangle = \sum_{i = 1}^n x_iy_i. \nonumber
\end{equation}

\section{Parallel Battery Model Formulation} \label{s:mod}

This section first reviews an equivalent circuit model (ECM) for a single battery cell, which is then electrically interconnected with other cell models to form a parallel arrangement of cells.

\subsection{Single Battery Cell}

Consider the ECM for a single battery cell, shown in \mbox{Fig. \ref{ECM}}, represented by the following continuous-time state-space representation,
\begin{align}
    \label{e:ecm}
    \dot{x}_k(t) & = \overline{A}_{k} x_k(t) + \overline{B}_{k} u_k(t), \\
    y_k(t) & = h_{k}(x_k(t)) + \overline{D}_{k} u_k(t),     \label{e:oecm}
\end{align}
where $x_k \in \mathbb{R}^2$ is the state vector for $k$-th battery cell in the parallel connection defined as
\begin{equation}
x_k \define [z_k, \ V_{c,k}]^\top, \nonumber 
\end{equation}
with $z_k$ as the SOC and $V_{c,k}$ as the capacitor voltage of the RC pair for the $k$-th battery cell. In \eqref{e:ecm}, $u_k \in \mathbb{R}$ is the applied current $u_k = I_k(t)$, and state matrix $\overline{A}_{k} {\in \mathbb{R}^{2\times2}}$ and input matrix $\overline{B}_{k}{\in \mathbb{R}^{2\times 1}}$ are given by
\begin{equation}
\overline{A}_{k} = \begin{bmatrix}0&0\\0&-\frac{1}{R_{2,k}C_k}\end{bmatrix}, \quad
\ \overline{B}_{k} = \begin{bmatrix}\frac{1}{Q_k} \\ \frac{1}{C_k}\end{bmatrix},
\label{e:ecmmat}
\end{equation}
where $Q_k$ represents the capacity of cell $k$, and $R_{1,k}$, $R_{2,k}$, $C_k$ are resistances and capacitance shown in Fig. \ref{ECM}. The output equation \eqref{e:oecm} for the $k$-th cell provides the voltage response characterized by the nonlinear function,
\begin{equation}
    \label{e:ohecm}
    h_{k}(x_k) = OCV(z_k) + V_{c,k},
\ \overline{D}_{k} = R_{1,k}\!,
\end{equation}
where $y_k \in \mathbb{R}$ is the battery terminal voltage, function $h_{k}:\mathbb{R}^2\rightarrow \mathbb{R}$ consists of the open circuit voltage as a function of SOC denoted as $OCV(z_k)$, voltage across the RC pair $V_{c,k}$, and voltage response due to an ohmic resistance $R_{1,k}$.
\begin{figure}[t]
	\centering
	\includegraphics[trim = 0mm 0mm 0mm 0mm, clip, width=0.8\linewidth]{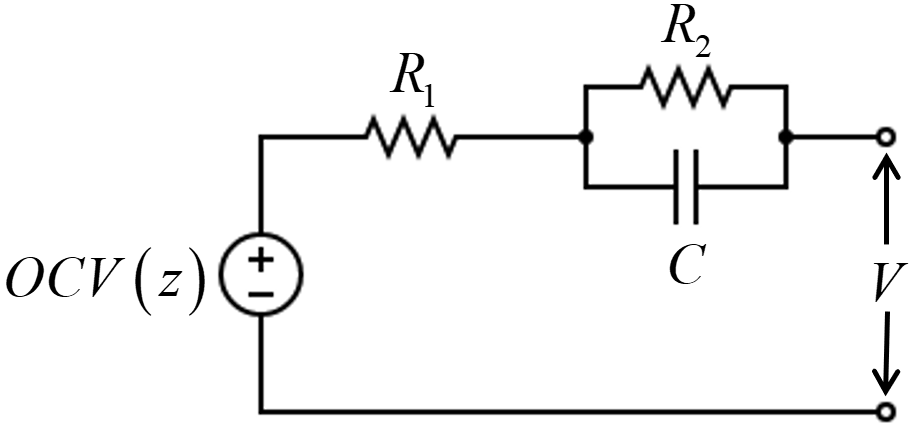}
	\caption{Schematic for an equivalent circuit model (ECM)}
	\label{ECM}
\end{figure}

\subsection{Parallel Arrangement of Battery Cells}

For a block of $n$ cells in parallel, in order to reduce sensing effort, we assume only the total current and voltage for one of the cells are measured. Electrically, Kirchhoff's voltage law indicates that a parallel connection of cells constraints terminal voltage to the same value for all cells. Kirchhoff's current law indicates that the overall current is equal to the summation of cell local currents. Mathematically, the following nonlinear algebraic constraints, according to Kirchhoff's voltage law, need to be enforced:
\begin{align}
    \label{volt_constraint}
    OCV(z_i) + V_{c,i} + R_{1,i} I_i = & OCV(z_j) + V_{c,j} + R_{1,j} I_j, \nonumber \\
    & \forall i,j\in\{1,2,\cdots,n\}, \; i\neq j.
\end{align}
Similarly, Kirchhoff's current law poses the following linear algebraic constraint with respect to cell local currents,
\begin{align}
    \label{cur_constraint}
    \sum_{k = 1}^{n} I_k(t) = I(t),
\end{align}
where $I(t)$ is the total current applied to the parallel battery system. It is worth highlighting that (\ref{volt_constraint}) imposes $(n-1)$ nonlinear algebraic constraints with respect to differential states and local currents, whereas (\ref{cur_constraint}) imposes 1 algebraic constraint with respect to local currents. In this manuscript, it is assumed that all cells have different electrical model parameters. In addition, when only the total current is measured, the local currents of cells are unknown. Hence, the system of differential-algebraic equations must be solved such that the algebraic equations (\ref{volt_constraint}) and (\ref{cur_constraint}) are fulfilled for all $t$. Such methodology is realized by augmenting the local currents to the differential state vector to form a nonlinear descriptor system \cite{duan2010analysis}, which takes the form
\begin{align}
    \label{descriptor_dyn}
    E \dot{w}(t) = & A w(t) + \theta(w(t)), \\
    \label{descriptor_out}
    y(t) = & Hw(t) + \phi(w(t)),
\end{align}
where $w = [x\quad u]^\top \in \mathbb{R}^{3n}$ with
\begin{align}
    x & = \begin{bmatrix}x_1 & x_2 & \cdots & x_{n}\end{bmatrix}^\top \in \mathbb{R}^{2n}, \\
    u & = \begin{bmatrix}I_1 & I_2 & \cdots & I_n\end{bmatrix}^\top \in \mathbb{R}^{n}, \\
    y & = \begin{bmatrix}y_1 & y_2 & \cdots & y_n\end{bmatrix}^\top \in \mathbb{R}^{n}.
\end{align}
Equation (\ref{descriptor_dyn}) encodes both the system dynamical equations and algebraic equations, and the matrix $E$ is a singular matrix of the form
\begin{equation}
    \label{E_mat}
    E = \begin{bmatrix}I_{2n \times 2n} & \mathbf{0}_{2n \times n} \\ \mathbf{0}_{n \times 2n} & \mathbf{0}_{n \times n}\end{bmatrix} \in \mathbb{R}^{3n \times 3n},
\end{equation}
where $I_{2n \times 2n}$ is an identity matrix of size $2n$-by-$2n$. Matrix $A$ accounts for the linear part of the system equations with
\begin{equation}
    A = \begin{bmatrix} A_{11} & A_{12} \\ A_{21} & A_{22} \end{bmatrix} \in \mathbb{R}^{3n \times 3n}, \label{e:Amatrix}
\end{equation}
where
\begin{align}
    \hspace{-0.15cm} A_{11} & = \mathrm{diag}(\overline{A}_{1},\overline{A}_{2},\cdots,\overline{A}_{n}), \ A_{12} = \mathrm{diag}(\overline{B}_{1},\overline{B}_{2},\cdots,\overline{B}_{n}), \\
    \hspace{-0.15cm} A_{21} & = \begin{bmatrix}
    0 & 1 & S & 0 & \cdots & 0 \\
    0 & 1 & 0 & S & \cdots & 0 \\
    \vdots & \vdots & \vdots & \vdots & \ddots & \vdots \\
    0 & 1 & 0 & 0 & \cdots & S \\
    0 & 0 & 0 & 0 & \cdots & 0 \\
    \end{bmatrix} \in \mathbb{R}^{n \times 2n}, \quad
    S = \begin{bmatrix} 0 & -1 \end{bmatrix}, \nonumber \\
    \hspace{-0.15cm} A_{22} & = \begin{bmatrix}
    R_{1,1} & -R_{1,2} & 0 & \cdots & 0 \\
    R_{1,1} & 0 & -R_{1,3} & \cdots & 0 \\
    \vdots & \vdots & \vdots & \ddots & \vdots \\
    R_{1,1} & 0 & 0 & \cdots & -R_{1,n} \\
    1 & 1 & 1 & \cdots & 1 \\
    \end{bmatrix} \in \mathbb{R}^{n \times n}.
    \label{e:A22mat}
\end{align}
Notice that matrix $A_{22}$ is full rank, i.e. the linear part of the descriptor model is regular and impulsive free.

Function $\theta(w)$ constitutes the nonlinear portion in the system equations from the voltage algebraic constraints (\ref{volt_constraint}):
\begin{equation}
    \theta(w) = \begin{bmatrix}
    \theta_x(w) \\
    \theta_u(w)
    \end{bmatrix} = 
    \begin{bmatrix}
    \mathbf{0}_{2n \times 1} \\
    OCV(z_1) - OCV(z_2) \\
    \vdots \\
    OCV(z_1) - OCV(z_n) \\
    -I(t)
    \end{bmatrix} \in \mathbb{R}^{3n},
\end{equation}
where $\theta_x \in \mathbb{R}^{2n}$ represents nonlinearity in the dynamical equations and corresponds to row 1 through row $2n$ of $\theta(w)$, and $\theta_u \in \mathbb{R}^{n}$ is nonlinearity appears in the algebraic equations and corresponds to row $(2n+1)$ through row $3n$ of $\theta(w)$. The output (\ref{descriptor_out}) models the voltage of each battery cell, with
\begin{equation}
    H = \left[H_x \;\; H_u\right], \quad \phi(w) = \left[ OCV(z_1) \;\; \cdots \;\; OCV(z_n) \right]^\top
    \label{e:Hmatrix}
\end{equation}
where $H_x$ corresponds to column 1 through column $2n$ of matrix $H$, and $H_u$ corresponds to column $(2n+1)$ through column $3n$ of matrix $H$:
\begin{align}
    H_x & = \mathrm{diag}(-S,-S,\cdots,-S) \in \mathbb{R}^{n\times2n}, \\
    H_u & = \mathrm{diag}(R_{1,1},R_{1,2},\cdots,R_{1,n}) \in \mathbb{R}^{n\times n}.
\end{align}
The model introduced above will be used in the analysis and designs in the subsequent sections.

\section{Motivation}
\label{s:mot}

\begin{figure}[t]
	\centering
	\includegraphics[trim = 3.5mm 2mm 8.5mm 10mm, clip, width=\linewidth]{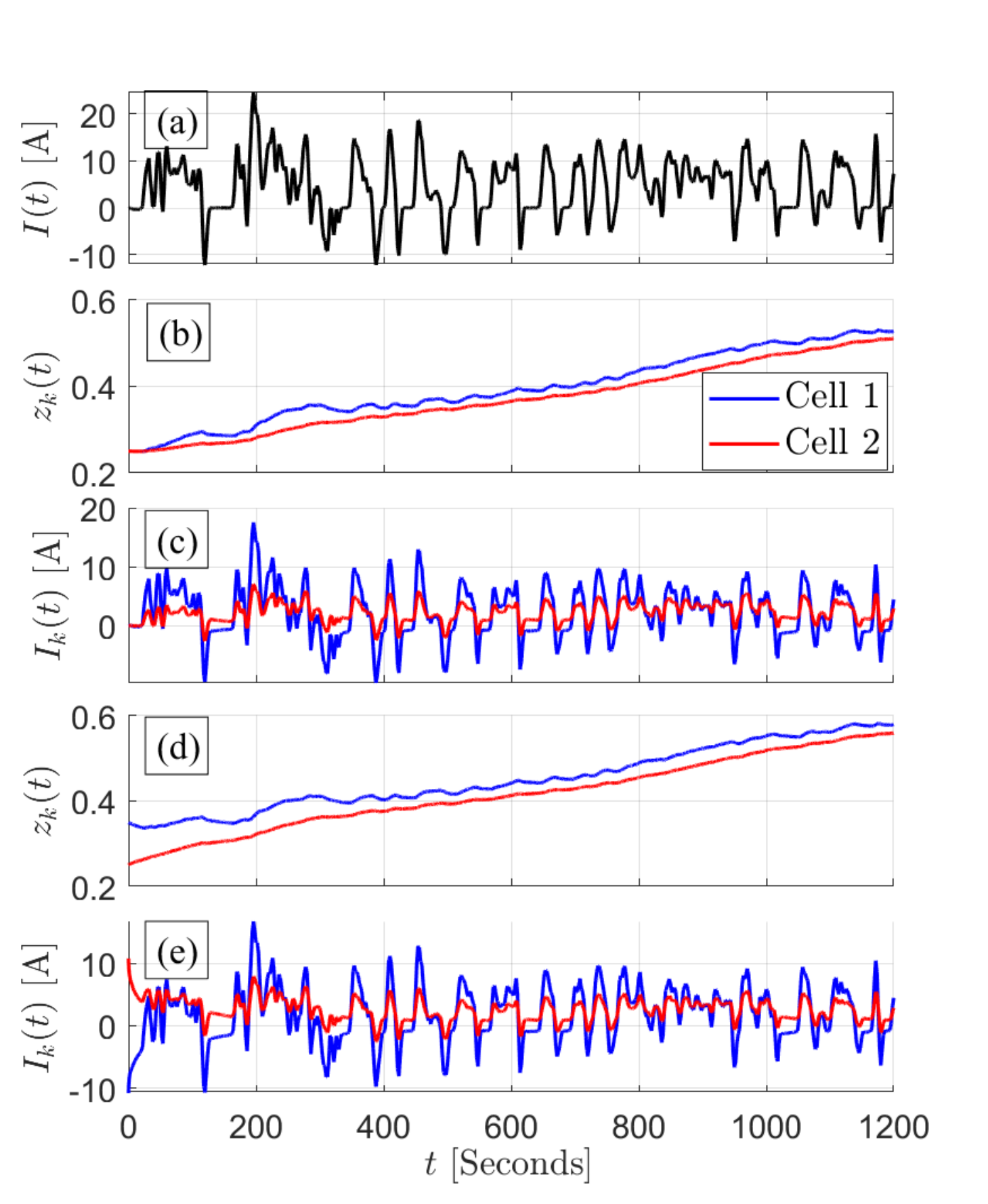}
	\vspace{-0.8cm}
    \caption{Simulation results of 2 cells in parallel with dissimilar model parameters and initial conditions. The total current is distributed unevenly due to cell heterogeneity.}
	\label{Motivation}
\end{figure}

In this section, we demonstrate heterogeneity for cells connected in parallel via an open-loop simulation. We study a block configuration of two LiNiMnCoO$_2$/Graphite (NMC) type cells with 2.8 Ah nominal capacity in parallel. The heterogeneity arises from differences in model parameters, namely $C_k, R_{1,k}$ and $R_{2,k}$ for $k \in \{1,2\}$, and difference in SOC initialization. A transient electric vehicle-like charge/discharge cycle generated from urban dynamometer driving schedule (UDDS) is applied. Specifically, the total applied current (summation of local currents) is plotted in Fig. \ref{Motivation}(a).

Two cases are examined here. In the first case, the cells are initialized at the same SOC, $z_k(0) = 0.25$ for $k \in \{1,2\}$, but they differ in model parameters. Since Cell 2 has larger resistance, its local current is smaller in magnitude relative to local current of Cell 1, while the summation of the local currents equals to the total applied current for all $t$, as shown in Fig. \ref{Motivation}(b) and (c). In the second case, illustrated in Fig. \ref{Motivation}(d) and (e), the cells are initially different in both SOC initialization ($z_1(0)$ at 0.35 and $z_2(0)$ at 0.25) and model parameters. It can be observed that even though the applied total current is small (around zero) initially, Cell 1 takes large negative current (around -10 A) and Cell 2 positions itself at a large positive current (around 10 A). This occurs because $z_1(0)$ is initialized higher, and even though the $z$ values for two cells follow a similar trend, they do not synchronize. 
Since conventional BMSs do not monitor the local current of each parallel cell, some cells might be operating outside their safe operating region. Therefore, it will be of significant value to estimate and monitor the local currents and SOCs caused by cell heterogeneity to ensure safe battery pack operation.

\section{Observability Analysis} \label{s:obs_ana}

In this section, we mathematically analyze the observability of the nonlinear descriptor system with the input-output setup, via (i) linearization and (ii) Lie algebra.

\subsection{Observability from Linearization}

In order to study the observability of the nonlinear descriptor system \eqref{descriptor_dyn}-\eqref{descriptor_out}, 
we linearize the system around an equilibrium point $w = \overline{w}$ and check the observability conditions for the linearized system. 
If the linearized system is observable at $w = \overline{w}$, then the nonlinear system is locally observable. However, the observability conditions arising from linearizing the nonlinear system can be conservative, and
nothing can be concluded for the nonlinear system if the linearized system is \emph{not} observable. The linearized model of \eqref{descriptor_dyn}-\eqref{descriptor_out} takes the form
\begin{align}
    E \dot{w}(t) = & F w(t) + B u(t), \label{e:linstate} \\
    y(t) = & C w(t), \label{e:linout}
\end{align}
where the state matrix $F\in\mathbb{R}^{3n\times3n}$ and output matrix $C\in\mathbb{R}^{n\times3n}$ are given by
\begin{equation}
F = \left. A + \frac{d \theta}{d w}(w)\right\vert_{w=\overline{w}}\!\!\!\!, \quad
C = \left. H + \frac{d \phi}{d w}(w)\right\vert_{w=\overline{w}}.
\end{equation}
with matrix $A$ and $H$ provided in \eqref{e:Amatrix} and \eqref{e:Hmatrix}, respectively.

Let us now introduce the definition of complete observability (C-observability) for the descriptor system \eqref{e:linstate}-\eqref{e:linout}.

\begin{theorem} [\cite{duan2010analysis}]
\label{thmcobs}
The regular descriptor linear system \eqref{e:linstate}-\eqref{e:linout} is C-observable if and only if the following two conditions hold:
\begin{enumerate}
    \item[C.1] $\mathrm{rank}\left\{ [ E^\top, C^\top ]^\top \right\} = 3n$;
    \item[C.2] $\mathrm{rank}\left\{ [ (s E - F)^\top, C^\top ]^\top \right\} = 3n, \ \forall s \in \mathbb{C}$.
\end{enumerate}
\end{theorem}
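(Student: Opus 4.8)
The plan is to reduce the pair $(E,F)$ to its Weierstrass (slow--fast) canonical form and then read each rank condition off the decoupled subsystems. Since the descriptor system is regular, there exist nonsingular $P,Q\in\mathbb{R}^{3n\times3n}$ such that, under the coordinate change $w = Q\tilde w$ and left multiplication by $P$,
\begin{equation}
PEQ = \begin{bmatrix} I_{n_1} & 0 \\ 0 & N \end{bmatrix}, \quad PFQ = \begin{bmatrix} J & 0 \\ 0 & I_{n_2} \end{bmatrix}, \quad CQ = \begin{bmatrix} C_1 & C_2 \end{bmatrix}, \nonumber
\end{equation}
where $N$ is nilpotent (the infinite/fast modes), $J$ collects the finite generalized eigenvalues (the slow modes), and $n_1+n_2=3n$. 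This splits the dynamics into a slow ordinary subsystem $\dot{\tilde w}_1 = J\tilde w_1 + \tilde B_1 u$ and a fast subsystem $N\dot{\tilde w}_2 = \tilde w_2 + \tilde B_2 u$, with output $y = C_1\tilde w_1 + C_2\tilde w_2$. I would first record the (standard) fact that, because the change of coordinates is invertible and preserves the input--output map, C-observability of the full descriptor system is equivalent to joint observability of these two subsystems.

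Next I would characterize observability of each subsystem separately. For the slow part, $(J,C_1)$ is an ordinary pair, so observability is exactly the PBH condition $\mathrm{rank}\,[\,(sI_{n_1}-J)^\top,\; C_1^\top\,]^\top = n_1$ for every $s\in\mathbb{C}$. For the fast part, nilpotency of $N$ (with index of nilpotency $h$) forces $\tilde w_2 = -\sum_{i=0}^{h-1} N^i \tilde B_2\, u^{(i)}$, so the continuous free response vanishes for $t>0$ and the fast initial state is encoded only in the impulsive part of the response at $t=0$; distinguishing distinct fast initial conditions from the output then reduces to $\mathrm{rank}\,[\,N^\top,\; C_2^\top\,]^\top = n_2$. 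I expect this fast-subsystem analysis to be the main obstacle, since it cannot be handled by a routine ODE observability argument: one must argue carefully that the impulsive modes are recoverable from $y$ together with finitely many input derivatives, and that the stated rank condition is precisely injectivity of the map from fast initial state to impulsive output.

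Finally I would lift both conditions back to the original coordinates, exploiting that left multiplication by $\mathrm{diag}(P,I)$ and right multiplication by $Q$ are nonsingular and hence rank-preserving:
\begin{align*}
\begin{bmatrix} P & 0 \\ 0 & I \end{bmatrix} \begin{bmatrix} E \\ C \end{bmatrix} Q &= \begin{bmatrix} I_{n_1} & 0 \\ 0 & N \\ C_1 & C_2 \end{bmatrix}, \\
\begin{bmatrix} P & 0 \\ 0 & I \end{bmatrix} \begin{bmatrix} sE-F \\ C \end{bmatrix} Q &= \begin{bmatrix} sI_{n_1}-J & 0 \\ 0 & sN-I_{n_2} \\ C_1 & C_2 \end{bmatrix}.
\end{align*}
Using the identity block $I_{n_1}$ to clear $C_1$ in the first display shows its column rank equals $n_1 + \mathrm{rank}\,[\,N^\top,\; C_2^\top\,]^\top$, so condition C.1 (rank $=3n$) is equivalent to fast-subsystem observability. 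In the second display the block $sN-I_{n_2}$ is nonsingular for every $s$ (its determinant is $(-1)^{n_2}\neq 0$ because $N$ is nilpotent), whence that column rank equals $n_2 + \mathrm{rank}\,[\,(sI_{n_1}-J)^\top,\; C_1^\top\,]^\top$, making condition C.2 equivalent to slow-subsystem observability for all $s\in\mathbb{C}$. Combining these two equivalences with the decoupling established in the first step yields the claimed ``if and only if''.
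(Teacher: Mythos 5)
You should first note a point the blind setting hides: the paper contains no proof of this theorem at all. It is quoted verbatim, with a citation, from \cite{duan2010analysis}, so the only meaningful comparison is with the standard textbook argument — which your plan reproduces faithfully: Weierstrass decomposition of the regular pencil $(E,F)$, separate characterization of the slow and fast subsystems, and rank-preserving transformations to lift the two conditions back to the original coordinates. Your lifting step is correct and complete: clearing $C_1$ with the $I_{n_1}$ block gives $\mathrm{rank}\,[\,E^\top\ C^\top\,]^\top = n_1 + \mathrm{rank}\,[\,N^\top\ C_2^\top\,]^\top$, and since $\det(sN-I_{n_2})=(-1)^{n_2}$ for nilpotent $N$, the pencil computation gives $\mathrm{rank}\,[\,(sE-F)^\top\ C^\top\,]^\top = n_2 + \mathrm{rank}\,[\,(sI_{n_1}-J)^\top\ C_1^\top\,]^\top$, so C.1 and C.2 are exactly the fast and slow subsystem conditions.

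The genuine gap is the step you yourself flagged as the main obstacle, and your stated resolution of it is false. With $u\equiv 0$ and an (inconsistent) fast initial value $\tilde w_2(0^-)=x_0$, the distributional response is $\tilde w_2(t) = -\sum_{i\geq 1}\delta^{(i-1)}(t)\,N^{i}x_0$, so the output encodes only the quantities $C_2N^{i}x_0$, $i\geq 1$: any component of $x_0$ in $\ker N$ leaves no trace in the output, nor even in the trajectory for $t\geq 0$. Hence the map from fast initial condition to output is \emph{never} injective (as $N$ is nilpotent, $\ker N\neq\{0\}$), and ``distinguishing distinct fast initial conditions'' cannot reduce to $\mathrm{rank}\,[\,N^\top\ C_2^\top\,]^\top=n_2$. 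Concretely, $N=\left[\begin{smallmatrix}0&1\\0&0\end{smallmatrix}\right]$, $C_2=[\,1\ \ 0\,]$ satisfies the rank condition ($\ker N\cap\ker C_2=\{0\}$), yet the first component of $x_0$ is unrecoverable; conversely, requiring only that the \emph{impulsive content} $\{N^ix_0\}_{i\geq1}$ be recoverable from $y$ yields the strictly weaker condition $\ker N\cap\ker C_2\cap\mathrm{im}\,N=\{0\}$ (impulse observability), so your reduction can be read as giving the wrong condition in either direction. What $\mathrm{rank}\,[\,N^\top\ C_2^\top\,]^\top=n_2$ actually characterizes is ordinary observability of the matrix pair $(C_2,N)$, i.e. $\bigcap_{i\geq 0}\ker(C_2N^i)=\{0\}$; the equivalence with $\ker N\cap\ker C_2=\{0\}$ follows by taking the largest $j$ with $N^jx\neq 0$. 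To close the proof you must therefore pin down the definition of C-observability used in \cite{duan2010analysis} — operationally obtained by duality with C-controllability of the fast subsystem, whose reachable set $\sum_{i\geq 0}N^i\,\mathrm{im}\,\tilde B_2$ equals $\mathrm{im}\,N+\mathrm{im}\,\tilde B_2$ — and prove the fast-subsystem equivalence under that definition, rather than under the impulsive-response argument you sketched.
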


Condition C.1 concerns C-observability of the fast (algebraic) subsystem while C.2 involves the slow (dynamic) subsystem.
Focusing first on condition C.1, it can be verified by construction since all cells in a parallel connection have the same voltage and therefore it is equivalent to measure the voltage of cell $i$ or $j$, with $i \neq j$.
Looking at condition C.2, it can be verified if the considered battery cells in parallel are different in terms of any model parameter among $\{ Q, R_1, R_2, C, {\rm d}OCV/{\rm d}z \}$. 
Finally, ${\rm d}OCV/{\rm d}z \neq 0$ is also required in order to guarantee condition C.2, i.e. if one of the cell's OCV curves becomes flat, then the observability of the dynamic linearized subsystem is lost. 
Notice that the verification of condition C.2 requires the numerical computation of the generalized eigenvalues of the pair $(E,A)$.

\subsection{Local Observability from Lie Algebra}

To elucidate if less conservative observability conditions exist, we analyze local observability for the nonlinear system resulting from a reduced descriptor system. That is, we analyze the system that results from eliminating the algebraic states through substitution. System \eqref{descriptor_dyn}-\eqref{descriptor_out} can be divided into differential and algebraic states with an explicit input current, i.e.
\begin{align}
    \label{descriptor_dynb}
    \hspace{-0.2cm}
    \left[\!\!\!\!\begin{array}{cc}
    I_{2n\times2n} &\!\! 0 \\
    0 &\!\! 0
    \end{array}\!\!\!\!\right] \dot{w}(t) 
    \!\!= \!\!
    &\left[\!\!\!\!\begin{array}{cc}
    A_{11} &\!\! A_{12} \\
    A_{21} &\!\! A_{22}
    \end{array}\!\!\!\!\right] w(t) \!+\!\!
    \left[\!\!\!\!\begin{array}{c}
    0 \\
    \theta_I
    \end{array}\!\!\!\!\right] I(t) \!+\!\!     \left[\!\!\!\!\begin{array}{c}
    \theta_x(w) \\
    \theta_{OCV}(w)
    \end{array}\!\!\!\!\right]\!\!, \\
    \label{descriptor_outb}
    y(t) = & [H_1 \;\; H_2]w(t) + \phi(w).
\end{align}
where $\theta_I = [0, \cdots, -1]^\top \in \mathbb{R}^n$ and $\theta_{OCV}(w) = \theta_u(w) - \theta_I I(t)$. 
Eq. \eqref{descriptor_dynb} can be solved for the algebraic state, resulting in the following transformation
\begin{equation}
    \label{e:ustate}
    u = - A_{22}^{-1}\left(A_{21} x(t) + \theta_I I(t) + \theta_{OCV}(w) \right)
\end{equation}
where matrix $A_{22}$ in \eqref{e:A22mat} is non-singular. Notice that \eqref{e:ustate} is an explicit solution for the algebraic state $u(t)$, i.e. the $w$-dependent functions $\theta_{OCV}(w)$ and $\phi(w)$ are now only dependent on the dynamic state $x(t)$. With an abuse of notation, these functions are denoted as $\theta_{OCV}(x)$ and $\phi(x)$ in the remainder of this section. For its part, $\theta_x(w) = \mathbf{0}_{2n \times 1} = \theta_x$.

Substituting \eqref{e:ustate} back into the differential part of the state equation \eqref{descriptor_dynb} and the output equation \eqref{descriptor_outb} yields the following nonlinear (control affine) reduced model
\begin{align}
    \label{descriptor_dynd}
    \dot{x}(t) = & f(x(t)) + g(x(t)) I(t), \\
    \label{descriptor_outd}
    \overline{y}(t) = & h(x(t)),
\end{align}
with
\begin{align}
    \label{ffunc}
    f(x) = & (A_{11} - A_{12} A_{22}^{-1} A_{21}) x(t) - A_{12} A_{22}^{-1} \theta_{OCV}(x) + \theta_x, \\
    g(x) = & - A_{12} A_{22}^{-1} \theta_I, \\
    \label{descriptor_outc}
    h(x) = & (H_{1} - H_{2} A_{22}^{-1} A_{21}) x(t) - H_{2} A_{22}^{-1} \theta_{OCV}(x) + \phi(x),
\end{align}
where the output $\overline{y}(t) = y(t) - H_{2} A_{22}^{-1} \theta_I I(t)$.

Let us now introduce the notion of local observability \cite{Vidy-2002} for system \eqref{descriptor_dynd}-\eqref{descriptor_outd}. 
\begin{theorem} \label{thm2}
The system \eqref{descriptor_dynd}-\eqref{descriptor_outd} is locally observable around $x_0 \in X$ if there exists $n$ linearly independent row vectors in the set
\begin{equation}
    ({\rm d}L_{z_s}L_{z_{s-1}} \ldots L_{z_1} h_j)(x_0) \nonumber
\end{equation}
where $L_{z_s}h_j$ are Lie derivatives of $h_j$ with respect to $z_s$ and ${\rm d} h_j$ is the gradient of $h_j$ to be defined below, $s\geq0$ and $z_i\in\{f,g\}$, with $j = 1,\ldots,p$ ($p$ is the number of outputs) and $i = 1,\ldots,s$ (for $s=0$, the set comprises ${\rm d}h_j(x_0)$).
\end{theorem}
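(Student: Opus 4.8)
The plan is to establish this sufficient condition by the classical differential-geometric route: relate the iterated Lie derivatives appearing in the statement to successive time-derivatives of the measured output, and then invoke the inverse function theorem to conclude that the initial state is uniquely pinned down by the input--output data in a neighborhood of $x_0$.

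First I would observe that along any trajectory of the control-affine system \eqref{descriptor_dynd}-\eqref{descriptor_outd} driven by a piecewise-constant admissible current, each output component $\overline{y}_j = h_j(x)$ and its successive time-derivatives, evaluated at $t=0$, are smooth functions of the initial state $x_0$ and of the applied input values: $\overline{y}_j(0) = h_j(x_0)$, then $\dot{\overline{y}}_j(0) = (L_f h_j)(x_0) + I(0)\,(L_g h_j)(x_0)$, and so on, with the $s$-th derivative producing the iterated Lie derivatives $L_{z_s}\cdots L_{z_1} h_j$, $z_i \in \{f,g\}$, weighted by monomials in the input. By probing the system with sufficiently many distinct constant inputs and separating the resulting monomials, every such Lie-derivative function is exhibited as a quantity reconstructible from the measurements; hence each element of the observation space is a function of $x_0$ that is recoverable from input--output data.

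Next I would select the $n$ Lie-derivative functions $\varphi_1,\ldots,\varphi_n$ whose differentials are the linearly independent covectors postulated in the hypothesis and form $\Phi(x) = [\varphi_1(x)\;\cdots\;\varphi_n(x)]^\top$. Its Jacobian at $x_0$ has these covectors as rows and is therefore nonsingular, so by the inverse function theorem $\Phi$ is a diffeomorphism from some neighborhood $U$ of $x_0$ onto its image, in particular injective on $U$. Because the first step shows $\Phi(x_0)$ is determined by the measured signals, any $x_0' \in U$ generating identical input--output data must satisfy $\Phi(x_0') = \Phi(x_0)$, whence $x_0' = x_0$. This is exactly local observability, i.e.\ distinguishability of $x_0$ from nearby states.

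The step I expect to be the main obstacle is the rigorous separation in the first step: since the input and its derivatives enter the higher-order output derivatives multiplicatively, one must justify --- e.g.\ by interpolating over a family of constant test inputs --- that each covector $\mathrm{d}L_{z_s}\cdots L_{z_1} h_j(x_0)$ can be isolated from the data. A secondary technical requirement is sufficient smoothness of $f$, $g$ and $h$, which reduces to smoothness of $OCV(\cdot)$ entering $\theta_{OCV}$ and $\phi$; this holds under the standing assumption that $\mathrm{d}OCV/\mathrm{d}z$ exists. With the criterion in hand, applying it to the battery model is then a finite computation: assemble $\mathrm{d}h$ and $\mathrm{d}L_f h$ at $x_0$ and verify full row rank, which I expect to hold precisely when the cells differ in at least one parameter and $\mathrm{d}OCV/\mathrm{d}z \neq 0$, paralleling the conclusion already reached for condition C.2.
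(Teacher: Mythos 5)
The paper offers no proof of Theorem~\ref{thm2}: it is stated as a known result imported from the cited reference \cite{Vidy-2002}, so there is no in-paper argument to compare yours against line by line. Your sketch is the classical Hermann--Krener/Vidyasagar sufficiency argument, and it is essentially correct: output differentiation along trajectories generates elements of the observation space, and applying the inverse function theorem to $\Phi = [\varphi_1 \; \cdots \; \varphi_n]^\top$, whose Jacobian at $x_0$ stacks the $n$ linearly independent covectors of the hypothesis, gives injectivity of $\Phi$ near $x_0$ and hence local distinguishability. The one step you flag as the ``main obstacle'' deserves a sharper resolution than interpolation over constant inputs alone: if the current is held at a single constant value $I$ for all time, the $s$-th output derivative is $L_{f+Ig}^{\,s}h_j(x_0)$, a polynomial in $I$ whose coefficients are only the \emph{symmetrized} sums of words of each degree (e.g.\ you can recover $L_fL_gh_j + L_gL_fh_j$ but never the two mixed words separately, no matter how many constant input levels you try). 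The standard fix, which your opening mention of piecewise-constant inputs implicitly licenses, is to use $s$ segments with independent constant values $I_1,\ldots,I_s$ and durations $t_1,\ldots,t_s$: the mixed partial $\partial^s/\partial t_1\cdots\partial t_s$ of the output at $t_i = 0$ is multi-affine in $(I_1,\ldots,I_s)$, and differentiating (or interpolating) in each $I_i$ isolates every individual word $L_{z_s}\cdots L_{z_1}h_j(x_0)$. With that repair the argument is complete, granting smoothness of $OCV(\cdot)$ as you note. Two cosmetic points: the symbol $n$ in the theorem is the state dimension, which for the battery model is $2n$ (the paper itself writes ${\rm rank}(\mathcal{O}) = 2n$), so your $\Phi$ should map $\mathbb{R}^{2n}\rightarrow\mathbb{R}^{2n}$ in the application; and your final paragraph verifying the rank condition with ${\rm d}h$ and ${\rm d}L_fh$ only is optimistic --- the paper's own analysis shows higher-order Lie derivatives (hence higher OCV derivatives) may be needed when the first ones degenerate.
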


In Theorem \ref{thm2}, the gradient of $h_j$ and Lie derivatives of $h_j$ with respect to function $f$ are given by
\begin{align}
    {\rm d}h_j &= \left[ \frac{\partial h_j}{\partial x_1} \;\;
    \frac{\partial h_j}{\partial x_2} \;\; \cdots \;\;
    \frac{\partial h_j}{\partial x_n} \right], \nonumber \\
    L_{f}h_j & = \langle {\rm d}h_j,f \rangle,
\end{align}
and similarly for function $g$. 
The $0$th order Lie derivative $L_f^0 h_j$ is defined as $h_j$ whereas the $2$nd order Lie derivative takes the form $L_f^2 h_j = L_f L_f h_j$.

According to Theorem \ref{thm2}, the observability rank condition can be derived. Define the observability matrix as
\begin{equation}
    \mathcal{O} = 
    \left[
    {\rm d}h(x) \;\;\;
    {\rm d}L_fh(x) \;\;\;
    {\rm d}L_gh(x) \;\;\;
    {\rm d}L_f^2h(x) \;\;\;
    {\rm d}L_g^2h(x) \;\;\;
    \cdots
    \right]^\top.
\end{equation}
Then the model \eqref{descriptor_dynd}-\eqref{descriptor_outd} is locally observable around $x_0$ if ${\rm rank}(\mathcal{O}) = 2n.$ Note first that the observability matrix $\mathcal{O}$ is not bounded from above. Secondly, this matrix depends on states $x_k = [z_k, V_{c,k}]^\top$, functions $OCV_k(z_k)$ as well as parameters $\theta_k = [R_{1,k}, \tau_k, C_k, Q_k]^\top$, where $\tau_k = -1/(R_{2,k}C_k)$.

To keep the observability analysis tractable, we consider two cells in parallel, i.e. $k\in\{1,2\}$. 
From the observability rank condition, we conclude: (i) the system is locally observable at $x_0$ if cells have different parameter values $\theta_k$, and (ii) observability conditions are not fulfilled if the cells are completely equivalent, i.e. all parameter values are uniform across the cells. In the second case, the cells are presumably indistinguishable and a single cell model can be utilized to represent the parallel connection. In between these two extremes, the observability conditions cannot be verified if any of the following conditions hold:
\begin{enumerate}
    \item the parameters $\tau_1 = \tau_2$ AND $R_{1,1} Q_1 = R_{1,2} Q_2$ AND $R_{1,1} C_1 = R_{1,2} C_2$;
    \item the functions $OCV(z_1) = OCV(z_2)$ AND $dOCV(z_1)/dz_1 = dOCV(z_2)/dz_2$;
    \item at least one of the $l$-th OCV derivatives 
    satisfy $d^lOCV(z_j)/dz_j^l = 0$, for the $j$-th cell and $l=1,\ldots,\infty$.
\end{enumerate}

Note that a classical approach to study observability of a single ECM is to linearize the model, as done in e.g. \cite{Rausch-2013}, for the case of two battery cells in parallel. By doing so, the ECM is observable if ${\rm d}OCV/{\rm d}z\neq 0$. This condition on the first OCV derivative is conservative as it was found in \cite{Zhao-2017} through the local observability analysis of a single cell. 
This more detailed analysis showed that an OCV derivative must be different than zero to guarantee local observability, but it does not need to be the first derivative. 
This fact was also verified above with a similar analysis for two ECMs in parallel.

A similar observability analysis as the one proposed here was also carried out in \cite{Lin-2015} considering cells in series. 
However, the observability matrix for a series string \mbox{differs} from that of a parallel arrangement. Namely, in the series arrangement, each cell's parameters/states appear in a column of $\mathcal{O}$. 
This is not the case for a parallel topology, where parameters/states of the cells are scattered all over the different entries in $\mathcal{O}$. 
Therefore, parameters/states of one cell influence the local observability of the neighbouring cell in a parallel arrangement.

When compared to the observability analysis of \mbox{Theorem \ref{thmcobs}} based on the linearized descriptor system, the local observability analysis of the nonlinear system using \mbox{Theorem \ref{thm2}} is less conservative \cite{Vidy-2002} and more informative. The latter aspect relies on the fact that the observability matrix $\mathcal{O}$ explicitly depends on model parameters, and it can be analytically obtained through e.g. symbolic software.

\section{Design of State Observers} \label{s:obs_design}

The following observer with linear output error injection is proposed for the plant model (\ref{descriptor_dyn})-(\ref{descriptor_out})
\begin{align}
    \label{obs_dyn}
    E\dot{\hat{w}} & = A\hat{w} + \theta(\hat{w}) + K(y-H\hat{w}-\phi(\hat{w})), \\
    \label{obs_out}
    \hat{y} & = H\hat{w} + \phi(\hat{w}),
\end{align}
where $K \in \mathbb{R}^{3n}$ is the observer gain vector to be designed and $\hat{w}$ is the estimation for $w$. The following theorem based on \cite{kaprielian1992observer,rajamani1998observers} establishes the convergence results of the proposed observer.

\begin{theorem} \label{thm_obs}
Consider the plant model dynamics (\ref{descriptor_dyn})-(\ref{descriptor_out}), and suppose the matrix $\begin{bmatrix}A_{22} & H_u\end{bmatrix}^\top$ has rank $n$. Let
\begin{equation}
    G = (A-KH) = \begin{bmatrix} G_{11} & G_{12} \\ G_{21} & G_{22} \end{bmatrix},
\end{equation}
and define the matrix
\begin{equation}
    \widetilde{G} = (G_{11}-G_{12}G_{22}^{-1}G_{21}).
\end{equation}
Suppose the function
\begin{equation}
    f(w) = \theta_x(w)-G_{12}G_{22}^{-1}\theta_u(w) + (G_{12}G_{22}^{-1}K_u-K_x)\phi(w),
\end{equation}
is Lipschitz continuous with respect to $x$, i.e.,
\begin{align}
    \|f(x_1,u) - f(x_2,u)\| & \leq \gamma \|x_1 - x_2\|,
\end{align}
where $\gamma \in \mathbb{R}$ is the Lipschitz constant. If the observer gain $K$ is chosen to ensure that $\widetilde{G}$ is stable, and
\begin{equation}
    \min_{\omega \in \mathbb{R}^+} \lambda_{\rm min}(A-KH-j\omega I_{3n \times 3n}) > \gamma,
\end{equation}
then the zero equilibrium of the dynamics of estimation error  $e(t) = w(t) - \hat{w}(t)$ given by
\begin{equation}
    \label{error_dyn}
    E\dot{e} = Ge + \theta(w) - \theta(\hat{w}) - K[\phi(w) - \phi(\hat{w})]
\end{equation}
is asymptotically convergent to zero.
\end{theorem}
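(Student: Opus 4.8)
The plan is to exploit the block structure induced by the singular matrix $E$ to collapse the $3n$-dimensional descriptor error dynamics \eqref{error_dyn} onto an ordinary differential equation in the differential-state error alone, and then to close the argument with a Lyapunov estimate of the Rajamani/Kaprielian type \cite{rajamani1998observers,kaprielian1992observer}. First I would partition the error as $e = [e_x\tr \;\; e_u\tr]\tr$, where $e_x \in \mathbb{R}^{2n}$ collects the differential states (SOCs and capacitor voltages) and $e_u \in \mathbb{R}^n$ collects the algebraic states (local currents), and partition $K = [K_x\tr \;\; K_u\tr]\tr$ conformably. Reading off the top $2n$ and bottom $n$ rows of \eqref{error_dyn} and using $E = \mathrm{diag}(I_{2n \times 2n}, \mathbf{0}_{n \times n})$ splits the dynamics into a differential equation
\begin{equation}
\dot{e}_x = G_{11}e_x + G_{12}e_u + \Delta\theta_x - K_x\,\Delta\phi, \nonumber
\end{equation}
and a purely algebraic relation
\begin{equation}
0 = G_{21}e_x + G_{22}e_u + \Delta\theta_u - K_u\,\Delta\phi, \nonumber
\end{equation}
where $\Delta\theta = \theta(w) - \theta(\hat{w})$ and $\Delta\phi = \phi(w) - \phi(\hat{w})$.

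The next step is to eliminate $e_u$. The rank hypothesis on $[A_{22} \;\; H_u]\tr$ is exactly what guarantees that $K_u$ can be chosen so that $G_{22} = A_{22} - K_u H_u$ is nonsingular; this is the algebraic (fast) subsystem analogue of condition C.1 in Theorem \ref{thmcobs}. Solving the algebraic relation gives $e_u = -G_{22}^{-1}(G_{21}e_x + \Delta\theta_u - K_u\,\Delta\phi)$, and substituting back produces the reduced error system
\begin{equation}
\dot{e}_x = \widetilde{G}\,e_x + \big[\,\Delta\theta_x - G_{12}G_{22}^{-1}\Delta\theta_u + (G_{12}G_{22}^{-1}K_u - K_x)\Delta\phi\,\big]. \nonumber
\end{equation}
The crucial observation is that the bracketed term is precisely $f(w) - f(\hat{w})$ for the function $f$ defined in the statement, because $\theta_x$, $\theta_u$ and $\phi$ depend on $w$ only through the differential substate $x$ (indeed $\theta_x \equiv \mathbf{0}$ here). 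Hence the reduced dynamics read $\dot{e}_x = \widetilde{G}\,e_x + \Delta f$ with $\|\Delta f\| \le \gamma\|e_x\|$ by the Lipschitz hypothesis.

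With the reduction in hand, I would take the quadratic Lyapunov candidate $V = e_x\tr P e_x$ with $P = P\tr \succ 0$. Since $\widetilde{G}$ is stable by hypothesis and the frequency-domain inequality $\min_{\omega}\lambda_{\min}(A - KH - j\omega I_{3n \times 3n}) > \gamma$ plays the role of a bounded-real margin, the standard construction supplies a $P \succ 0$ solving a Riccati-type inequality $\widetilde{G}\tr P + P\widetilde{G} + \gamma^2 PP + I \preceq 0$. Differentiating $V$ along the reduced dynamics and bounding the cross term $2\,e_x\tr P\,\Delta f$ by Young's inequality together with $\|\Delta f\| \le \gamma\|e_x\|$ yields $\dot{V} \le e_x\tr(\widetilde{G}\tr P + P\widetilde{G} + \gamma^2 PP + I)e_x \le -\epsilon\|e_x\|^2 < 0$ for $e_x \neq 0$, so $e_x \to 0$ asymptotically. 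Finally, because $\Delta\theta_u$ and $\Delta\phi$ are continuous functions of $x$ that vanish as $e_x \to 0$, the algebraic relation $e_u = -G_{22}^{-1}(G_{21}e_x + \Delta\theta_u - K_u\,\Delta\phi)$ forces $e_u \to 0$ as well, establishing convergence of the full error $e$.

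The main obstacle I anticipate is the passage from the spectral condition to the existence of the Lyapunov matrix $P$: making the bounded-real argument rigorous for the \emph{reduced} matrix $\widetilde{G}$ while the hypothesis is phrased in terms of the full matrix $A - KH$ requires care, as does confirming that the stated $\lambda_{\min}$ quantity is the correct robustness margin (the classical Rajamani statement uses the minimum singular value of $\widetilde{G} - j\omega I$). By comparison, verifying nonsingularity of $G_{22}$ from the rank hypothesis, and checking that the Lipschitz bound survives the $G_{12}G_{22}^{-1}$ weighting so that $\|\Delta f\| \le \gamma\|e_x\|$, are comparatively routine.
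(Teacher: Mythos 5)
Your proposal follows essentially the same route as the paper's proof: the same partition $e = [e_x\tr \;\; e_u\tr]\tr$, nonsingularity of $G_{22}$ justified by the rank-$n$ (impulse observability) condition on $[A_{22} \;\; H_u]\tr$, elimination of $e_u$ to get $\dot{e}_x = \widetilde{G}e_x + f(w) - f(\hat{w})$, a quadratic Lyapunov function bounded via the Young-type inequality $2\gamma\|Pe_x\|\|e_x\| \leq \gamma^2 e_x\tr PP e_x + e_x\tr e_x$, and the final continuity argument that $e_x \to 0$ forces $e_u \to 0$ through the algebraic relation. The one point where you go further is the closing step: the paper does not construct $P$ but simply invokes Theorem 2 of \cite{rajamani1998observers}, so the obstacle you correctly flag --- that the spectral condition is stated on the full $3n \times 3n$ matrix $A-KH$ with $\lambda_{\min}$ where Rajamani's result concerns $\sigma_{\min}(\widetilde{G} - j\omega I)$ for the reduced matrix --- is present but glossed over in the paper's own proof as well.
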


\begin{proof}
Let the state estimation error $e = \begin{bmatrix} e_x & e_u \end{bmatrix}^\top$, with $e_x = x - \hat{x}$ being the estimation error for the differential states and $e_u = u - \hat{u}$ the estimation error for the algebraic states. Then (\ref{error_dyn}) can be written as
\begin{align}
    \label{error_sys}
    \begin{bmatrix} I_{2n \times 2n} & \mathbf{0} \\ \mathbf{0} & \mathbf{0} \end{bmatrix} \begin{bmatrix} \dot{e}_x \\ \dot{e}_u  \end{bmatrix} = & \begin{bmatrix} G_{11} & G_{12} \\ G_{21} & G_{22} \end{bmatrix} \begin{bmatrix} e_x \\ e_u \end{bmatrix} + \begin{bmatrix} \theta_x(w) - \theta_x(\hat{w}) \\ \theta_u(w) - \theta_u(\hat{w})  \end{bmatrix} \nonumber \\
    & - \begin{bmatrix} K_x \\ K_u \end{bmatrix}[\phi(w) - \phi(\hat{w})].
\end{align}
We highlight that $G_{22}$ can be non-singular (i.e., invertible) if the linear part of (\ref{descriptor_dyn}) is impulse observable \cite{kaprielian1992observer}, i.e., the matrix $\begin{bmatrix}A_{22} & H_u\end{bmatrix}^\top$ has rank $n$. Then the estimation error system (\ref{error_sys}) is equivalently described by
\begin{align}
    \label{ex_dyn}
    \dot{e}_x = & (G_{11}-G_{12}G_{22}^{-1}G_{21})e_x + [\theta_x(w)-G_{12}G_{22}^{-1}\theta_u(w)] \nonumber \\
    & - [\theta_x(\hat{w})-G_{12}G_{22}^{-1}\theta_u(\hat{w})] + (G_{12}G_{22}^{-1}K_u-K_x)\phi(w) \nonumber \\
    & - (G_{12}G_{22}^{-1}K_u-K_x)\phi(\hat{w}) \nonumber \\
    = & \widetilde{G}e_x + f(w) - f(\hat{w}),
\end{align}
along with the algebraic equation
\begin{align}
    e_u = & -G_{22}^{-1}G_{21}e_x - G_{22}^{-1}[\theta_u(w) - \theta_u(\hat{w})] \nonumber \\
    & + G_{22}^{-1}K_u[\phi(w) - \phi(\hat{w})].
\end{align}

Consider the following Lyapunov function for the error system (\ref{ex_dyn}), corresponding to the differential states $e_x$,
\begin{equation}
    \label{Lyap}
    W(t) = \frac{1}{2}e_x^\top P e_x.
\end{equation}
The derivative of the Lyapunov function $W(t)$ along the trajectory of $e_x$ is computed by
\begin{align}
    \dot{W} = & \frac{1}{2}\dot{e}_x^\top P e_x + \frac{1}{2}e_x^\top P \dot{e}_x \nonumber \\
    = & \frac{1}{2}e_x^\top(\widetilde{G}^\top P+P\widetilde{G})e_x + e_x^\top P [f(w) - f(\hat{w})] \nonumber \\
    \leq & \frac{1}{2}e_x^\top (\widetilde{G}^\top P+P\widetilde{G}) e_x + \|Pe_x\|[\|f(w) - f(\hat{w})\|] \nonumber \\
    \leq & \frac{1}{2}e_x^\top (\widetilde{G}^\top P+P\widetilde{G}) e_x + \gamma\|Pe_x\|\|e_x\| \nonumber \\
    \leq & \frac{1}{2}e_x^\top[\widetilde{G}^\top P+P\widetilde{G}+\gamma^2PP+I]e_x,
\end{align}
where the inequality
\begin{equation}
    2\gamma\|Pe_x\|\|e_x\| \leq \gamma^2e_x^\top PP e_x + e_x^\top e_x
\end{equation}
has been utilized. According to Theorem 2 in \cite{rajamani1998observers}, the estimation error $e_x$ is asymptotically stable if the conditions of Theorem \ref{thm_obs} hold. Under this scenario, when $t \rightarrow \infty$, [$\theta_u(w) - \theta_u(\hat{w})] \rightarrow \mathbf{0}_{n \times 1}$, and $[\phi(w) - \phi(\hat{w})] \rightarrow 0$. Hence, the estimation error $e_u$ for the algebraic states also converge to zero asymptotically.
\end{proof}

\section{Simulation Study} \label{s:sim}

\begin{table}[t]
    \caption{Model Parameters in Simulation Study} \label{params}
	\centering
	\begin{tabular}{  c c c c }
	\hline \hline
	 & Cell 1 & Cell 2 & Units \\ \hline
	$R_{1,k}$ & 0.0025 & 0.0015 & [$\Omega$] \\
    $R_{2,k}$ & 0.004 & 0.0035 & [$\Omega$]  \\
    $C_k$ & 1500 & 2000 & [F] \\
    $Q_k$ & 2.3 & 2.0 & [Ah] \\
    $z_0$ & 0.4 & 0.5 & [--]  \\
    \hline \hline
	\end{tabular}
\end{table}

A simulation study using a battery block with $n = 2$ NMC cells connected in parallel is conducted to evaluate the performance of the proposed estimation scheme. Without loss of generality, a pair of cells is preferred over a larger block to facilitate the presentation of results. 
We consider the situation in which the cells may differ in their initial SOCs and model parameters, but subject to the same SOC-OCV relationship. The considered setup guarantees local observability based on the analysis in Section \ref{s:obs_ana}-A.

The model parameters and initial SOCs are shown in Table \ref{params}. Under these circumstances, the state vector is given by
\begin{equation}
    w = \begin{bmatrix}z_1 & V_{c,1} & z_2 & V_{c,2} & I_1 & I_2\end{bmatrix}^\top \in \mathbb{R}^6.
\end{equation}
In this simulation study, the total applied current comes from a UDDS drive cycle provided in Fig. \ref{Motivation}(a) with appropriate scaling.
The observer in (\ref{obs_dyn})-(\ref{obs_out}) is used to estimate the individual cell SOCs and the local currents by using only the voltage and overall current measurements. The initial SOC errors between the plant model and the observer are 15\% and 10\%, respectively. The observer gain is chosen to be $L = \left[-30\;-30\;-20\;\;2\;\;4\;-20\right]^\top$,
which satisfies the conditions of Theorem \ref{thm_obs}.

\begin{figure}[t]
	\centering
	\includegraphics[trim = 6mm 0mm 9.5mm 5mm, clip, width=\linewidth]{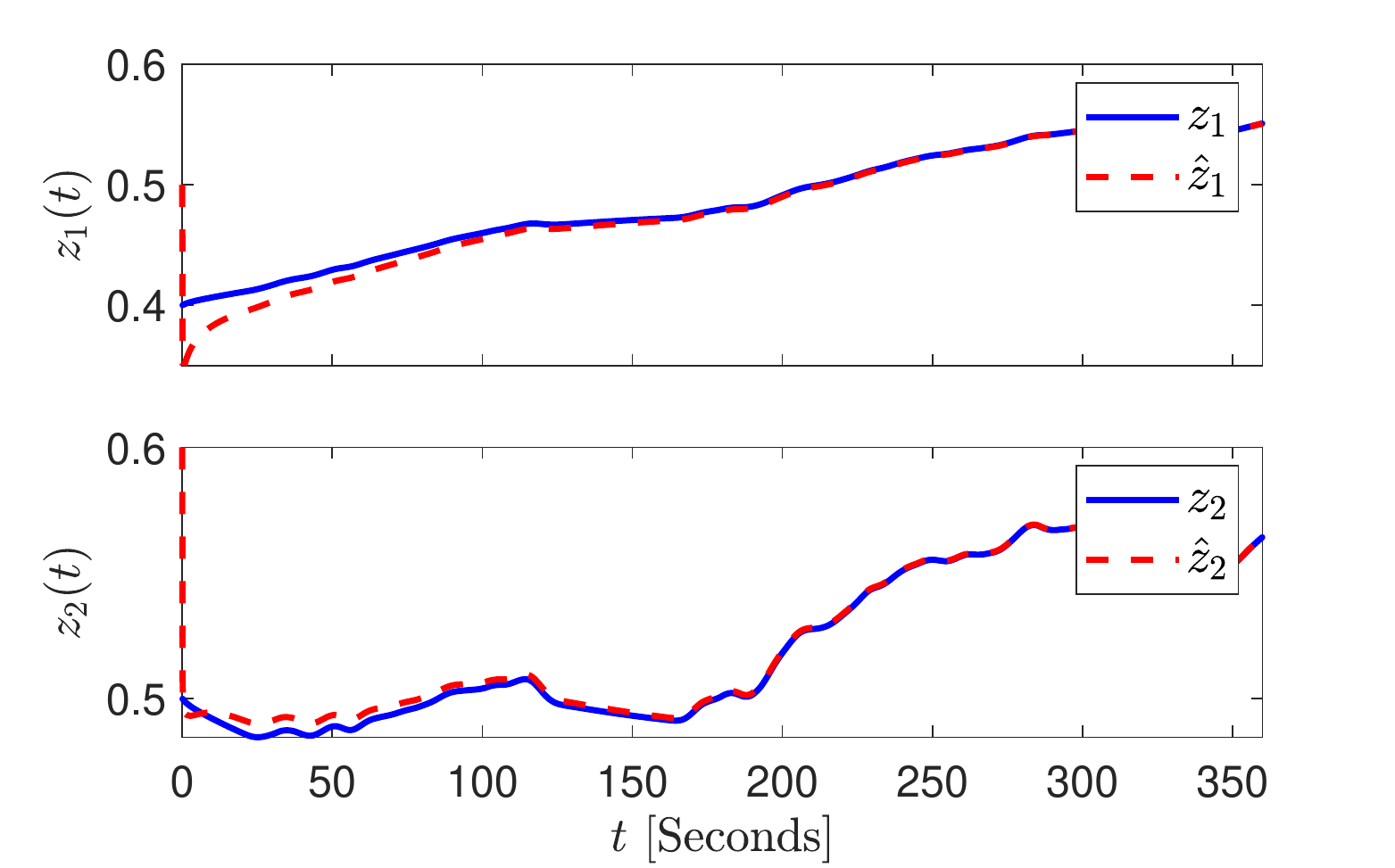}
	\caption{The estimation performance for SOCs of a two-cell parallel configuration. The results verify the asymptotic convergence.}
	\label{fig:SOC}
\end{figure}

\begin{figure}[t]
	\centering
	\includegraphics[trim = 7mm 0mm 9.5mm 5mm, clip, width=\linewidth]{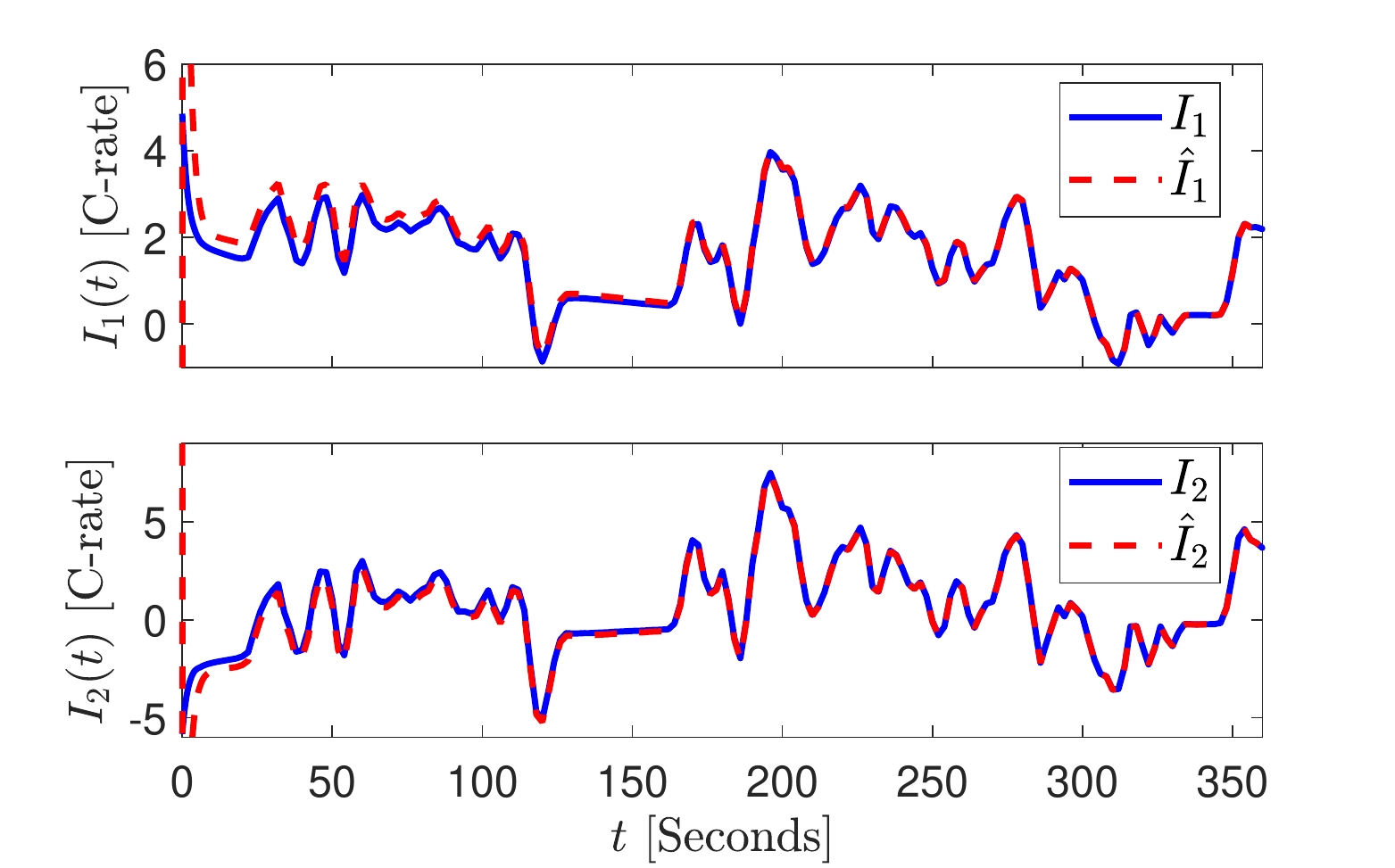}
	\caption{The estimation performance for local currents of a two-cell parallel configuration. The results verify the asymptotic convergence.}
	\label{fig:Current}
\end{figure}

Figs. \ref{fig:SOC} and \ref{fig:Current} demonstrate estimation performance, where the solid blue curves are the true states from the plant model and the dashed red curves are the estimated ones. 
Fig. \ref{fig:SOC} displays the estimates for the differential states (SOCs), whereas Fig. \ref{fig:Current} portrays the estimates for the algebraic states (local currents). The state observer is able to recover the true signals quickly (in approximately 100 seconds) from large initial estimation errors. These results confirm the asymptotic zero error convergence conclusions in Theorem \ref{thm_obs}.

\section{Conclusion} \label{s:conclusion}

A nonlinear descriptor system has been proposed to model parallel arrangements of lithium-ion battery cells, and a state observer for such systems has been developed. 
This modelling framework fits naturally with battery applications, given the interconnections arising from Kirchhoff's laws. 
The design procedure used to build the state observer from this model avoids linearization or canonical transformations, and it only relies on the assumption of Lipschitz nonlinearities. 
The resulting state observer benefits from considering the cell currents as algebraic states to be simultaneously estimated with the differential states. 
The effectiveness of the proposed estimation approach was demonstrated in simulation.

\bibliographystyle{ieeetr}
\bibliography{Reference} 

\end{document}